\newtheorem{definition}{Definition}
\newtheorem{lemma}{Lemma}
\newtheorem{theorem}{Theorem}
\def\tr{^{\mathsf{T}}}
\title{\LARGE \bf
Resilient Continuous-Time Consensus in Fractional Robust Networks
}
\author{Heath~J.~LeBlanc,$^{1}$ Haotian~Zhang,$^{2}$ Shreyas~Sundaram,$^{2}$ and Xenofon~Koutsoukos$^{3}$% <-this % stops a space
%\thanks{This work is supported in part by the National Science Foundation (CNS-1035655, CCF-0820088),
%the U.S. Army Research Office (ARO W911NF-10-1-0005), and Lockheed Martin. The views and conclusions
%contained herein are those of the authors and should not be interpreted as
%necessarily representing the official policies or endorsements, either expressed or
%implied, of the U.S. Government.}% <-this % stops a space
\thanks{$^{1}$H. J. LeBlanc is with the Department of Electrical \& Computer Engineering and Computer Science,
        Ohio Northern University, Ada, OH, USA {\tt\small h-leblanc@onu.edu}}%
\thanks{$^{2}$H. Zhang and S. Sundaram are with the Department of Electrical and Computer Engineering at the University of Waterloo, Waterloo, Ontario, Canada {\tt\small {h223zhan,ssundara}@uwaterloo.ca}.}%
\thanks{$^{3}$X. Koutsoukos is with the Department of Electrical Engineering and Computer Science,
        Vanderbilt University, Nashville, TN, USA {\tt\small xenofon.koutsoukos@vanderbilt.edu}}%
}
\begin{document}

\maketitle
\thispagestyle{empty}
\pagestyle{empty}

%%%%%%%%%%%%%%%%%%%%%%%%%%%%%%%%%%%%%%%%  ABSTRACT  %%%%%%%%%%%%%%%%%%%%%%%%%%%%%%%%%
\begin{abstract}
In this paper, we study the continuous-time consensus problem in the presence of adversaries. 
The networked multi-agent system is modeled as a switched system, where the normal agents have integrator dynamics and the switching signal determines the topology of the network. We consider several models of omniscient adversaries under the assumption that at most a fraction of any normal agent's neighbors may be adversaries. Under this fractional assumption on the interaction between normal and adversary agents, we show that a novel graph theoretic metric, called {\it fractional robustness}, is useful for analyzing the network topologies under which the normal agents achieve consensus.
\end{abstract}
%\begin{IEEEkeywords}
%Consensus; In-Network Computation; Resilience; Byzantine; Adversary; Robust Networks; Distributed Algorithms.
%\end{IEEEkeywords}

\section{\uppercase{Introduction}}
\label{S:introduction}
Large-scale networks are ubiquitous in nature (e.g., flocks of birds or schools of fish) and are becoming increasingly more pervasive in engineered systems (e.g., large-scale sensor networks). For these systems, reaching consensus in a distributed manner is fundamental to coordination and is therefore a common objective in applications ranging from clock synchronization in sensor networks~\cite{Li06_GlobalClockSynchInSNs} to flocking~\cite{Jadbabaie03_CoordOfGroupsOfMobileAutAgentsUsingNearNeighRule}. However, large-scale distributed systems are susceptible to malicious attacks and failures.  If a security breach occurs, many consensus algorithms fail to achieve consensus, and are therefore not resilient~\cite{Gupta06_OnRobustnessOfDistAlgs}.  %Hence, there is a need for resilient consensus algorithms that guarantee correct behavior even after sustaining security breaches.

Fault-tolerant and resilient consensus algorithms have been studied extensively over the years~\cite{lynch96:dist_algs, Hromkovic05_DissOfInfoInCommNets}, particularly in the presence of Byzantine nodes under the assumption that at most $F$ of the nodes are compromised~\cite{Lamport82_ByzGeneralProb}. Byzantine nodes are deceptive, can behave arbitrarily within the limitations set by the model of computation, and may be viewed as adversarial in nature. Many of the resilient consensus algorithms studied in the literature are computationally expensive and require at least some global information. However, a class of computationally efficient resilient consensus algorithms that use only local information are developed in \cite{Dolev86_ReachingAppAgreeInPresOfFaults} and \cite{Azad94}, referred to as the {\it Mean-Subsequence-Reduced (MSR)} algorithms~\cite{Azad94}.
The idea behind the MSR algorithms is simple: under the assumption that at most $F$ nodes fail, each normal node removes the largest and smallest $F$ values (i.e., the extreme values) in its neighborhood and takes the average from a subset of the remaining values. MSR algorithms have been used extensively to achieve fault tolerant and resilient consensus (e.g., in clock synchronization~\cite{Li06_GlobalClockSynchInSNs} and robot gathering~\cite{Agmon06_FaultTolGathAlgs4AutMobRobots}). However, the network topological condition for characterizing convergence has long been an open problem.

Recently, it has been shown that traditional graph theoretic metrics (such as connectivity) are inadequate for characterizing the conditions under which the MSR algorithms achieve resilient consensus~\cite{Zhang2012_RobInfoDiffAlgs2LocBdAdv, LeBlanc_LowCompResConsAdv_HSCC12}. Because of the removal of extreme values in MSR algorithms, a property that encapsulates the notion of sufficient {\it local} redundancy of incoming information is needed. This idea is captured by {\it network robustness}~\cite{Zhang2012_RobInfoDiffAlgs2LocBdAdv, LeBlanc2012_ConsOfMANsInPresOfAdvUsingOnlyLocalInfo} and a similar property studied in \cite{Vaidya2012_ItAppByzConsInArbDigraphs}. Equipped with these properties, the necessary and sufficient conditions for convergence of a class of MSR algorithms have been given for the Byzantine model~\cite{Vaidya2012_ItAppByzConsInArbDigraphs} and for a local broadcast version of the Byzantine model~\cite{LeBlanc2012_ConsOfMANsInPresOfAdvUsingOnlyLocalInfo}  (referred to as the malicious adversary), under the assumption that at most $F$ nodes are compromised.

In this paper, we continue our study of continuous-time versions of MSR algorithms~\cite{LeBlanc11_ConsNetMASWithAdv, LeBlanc_LowCompResConsAdv_HSCC12, LeBlanc2012_ConsOfMANsInPresOfAdvUsingOnlyLocalInfo}. We study both malicious and Byzantine adversaries, along with the {\it crash adversary}, which is inspired by the crash faulty robots in robot gathering~\cite{Agmon06_FaultTolGathAlgs4AutMobRobots} and is similar to the fault attack model of \cite{Zeng2012_SecureDistCtrlInUnreliableDNCS}. Instead of assuming an absolute bound on the number of compromised nodes in a normal node's neighborhood, we assume at most a fraction of nodes $f$ may be compromised in the neighborhood. This fractional assumption accounts for the degree of influence on the nodes in the network. We adapt the continuous-time MSR algorithm, referred to as the {\it Adversarial Robust Consensus Protocol (ARC-P)}~\cite{LeBlanc11_ConsNetMASWithAdv, LeBlanc_LowCompResConsAdv_HSCC12, LeBlanc2012_ConsOfMANsInPresOfAdvUsingOnlyLocalInfo}, to the fractional assumption. To analyze ARC-P under the fractional adversary assumption, we consider a fractional form of robustness, introduced in \cite{LeBlanc2012_RACinRobustNets}, and prove separate necessary and sufficient conditions under each of the adversary models. The necessary condition is stated in terms of time-invariant network topology, whereas the sufficient condition applies to time-varying network topologies that are sufficiently robust under a dwell time assumption. The main contribution of this paper is the continuous-time analysis under the fractional scope of threat assumption with the Byzantine, malicious, and crash adversary models.

%%%%%%%%%%%%%%%%%%%%%%%%%%%%%%%%%%%%%%%%%%%%%%%%%%%%%%%%%%%%%%%%%%%%%%%%%%%%%%%%%%%%%%%%%%%%%%%%%%%%%
%Problem

\section{\uppercase{System Model and Problem Formulation}}
\label{S:Model_Problem}
Consider a time-varying network modeled by the \emph{digraph}  $\mathcal{D}(t)=(\mathcal{V},\mathcal{E}(t))$, where $\mathcal{V}=\{1,...,n\}$ is the \emph{node (agent) set} and $\mathcal{E}(t) \subset \mathcal{V} \times \mathcal{V}$ is the \emph{directed edge set} at time $t$. %The nodes represent the agents in the multi-agent network.
Without loss of generality, the node set is partitioned into a set of $N$ {\it normal agents} $\mathcal{N}=\{1,2,\dotsc,N\}$ and a set of $M$ {\it adversary agents} $\mathcal{A}=\{N+1,N+2,\dotsc,n\}$, with $M=n-N$. Each directed edge $(j,i)\in \mathcal{E}(t)$ indicates that node $i$ can be influenced by node $j$ at time $t$. In this case, we say that agent $j$ {\it conveys} information to agent $i$. The sets of {\it in-neighbors} and {\it out-neighbors} of node $i$ at time $t$ are defined by $\mathcal{N}^{\text{in}}_i(t)=\{j\in\mathcal{V}\colon (j,i)\in\mathcal{E}(t)\}$ and $\mathcal{N}^{\text{out}}_i(t)=\{j\in\mathcal{V}\colon (i,j)\in\mathcal{E}(t)\}$, respectively. The set of all digraphs on $n$ nodes is denoted by $\Gamma_n=\{\mathcal{D}_1,\dotsc,\mathcal{D}_d\}$.% which is of course a finite set. Note that $\mathcal{D}(t)\in\Gamma_n$ for all $t\in\mathbb{R}_{\geq 0}$.

 % switching behavior
The time-varying topology of the network is governed by a piecewise constant switching signal $\sigma\colon \mathbb{R}_{\geq0}\rightarrow\{1,\dotsc,d\}$. At each point in time $t$, $\sigma(t)$ dictates the topology of the network, %at time $t$,
and $\sigma$ is continuous from the right everywhere. In order to emphasize the role of the switching signal, we denote $\mathcal{D}_{\sigma(t)}=\mathcal{D}(t)$. Note that time-invariant networks are represented by simply dropping the dependence on time $t$.

 % state (or value) notation
The agents share state information with one another according to the topology of the network. Each normal agent's state (or value) at time $t$ is denoted as $x_i(t)\in\mathbb{R}$.  In order to handle deceptive adversaries, we let $x_{(j,i)}(t)$ denote the state of agent $j$ {\it intended} for agent $i$ at time $t$. For consistency of notation, we define $x_{(j,i)}(t)$ for {all} $j,i\in\mathcal{V}$, even if $(j,i)\notin \mathcal{E}(t)$. In the case that $j\in\mathcal{N}$ is normal, we define $x_{(j,i)}(t)\equiv x_{j}(t)$ (and in particular, $x_{(j,j)}(t)\equiv x_{j}(t)$). With this terminology, we denote the collective states of all agents in $\mathcal{N}$, $\mathcal{A}$, and $\mathcal{V}$ intended for agent $i$ by
$$
x_{(\mathcal{N},i)}(t)=[x_{1}(t),\dotsc,x_{N}(t)]\tr\in\mathbb{R}^N,
$$
$$
x_{(\mathcal{A},i)}(t)=[x_{(N+1,i)}(t),\dotsc,x_{(n,i)}(t)]\tr\in\mathbb{R}^M,
$$
and
$$
x_{(\mathcal{V},i)}(t)=[x_{(1,i)}(t),\dotsc,x_{(n,i)}(t)]\tr\in\mathbb{R}^n,
$$
respectively. Since $x_{(\mathcal{N},i)}(t)\equiv x_{(\mathcal{N},j)}(t)$ for all $i,j\in\mathcal{V}$, we unambiguously define $x_{\mathcal{N}}(t)=x_{(\mathcal{N},i)}(t)$ for any $i\in\mathcal{V}$. Finally, we denote the vector containing all adversary states intended for the normal agents by $x_{(\mathcal{A}, \mathcal{N})}(t)=[x_{(\mathcal{A},1)}\tr(t),\dotsc,x_{(\mathcal{A},N)}\tr(t)]\tr\in\mathbb{R}^{MN}$.

\subsection{Normal Agent Dynamics}
Each normal agent $i \in \mathcal{N}$ has scalar state $x_i(t)\in\mathbb{R}$ and integrator dynamics given by $\dot{x}_{i} = u_{i}$, where $u_{i} = f_{i,\sigma(t)}(t,x_{\mathcal{N}},x_{(\mathcal{A},i)})$ is a control input. Because there is no prior knowledge about which agents are adversaries, the control input must treat the state information from neighboring agents in the same manner. The system of normal agents are then defined for $t\in\mathbb{R}_{\geq0}$ by
\begin{equation}
\label{E:CoopAgents}
\dot{x}_{\mathcal{N}}(t) = f_{\sigma(t)}(t,x_{\mathcal{N}},x_{(\mathcal{A},\mathcal{N})}), \enspace x_{\mathcal{N}}(0)\in\mathbb{R}^N, \mathcal{D}_{\sigma(t)}\in\Gamma_n,
\end{equation}
where $f_{\sigma(t)}(\cdot)=[f_{1,\sigma(t)}(\cdot),\dotsc,f_{N,\sigma(t)}(\cdot)]^{\mathsf{T}}$. Note that for existence of solutions on $\mathbb{R}_{\geq0}$, the $f_{i,\sigma(t)}(\cdot)$'s must be bounded and piecewise continuous with respect to the adversaries' trajectories. %Each of the functions $f_{i,\sigma(t)}(\cdot)$  must satisfy appropriate assumptions to ensure existence of solutions.
These functions should be designed {\it a priori} so that the normal agents can reach consensus {\it without} prior knowledge about the identities of the adversaries.%\footnote{Note that for existence of solutions on $\mathbb{R}_{\geq0}$, the $f_{i,\sigma(t)}(\cdot)$'s must be bounded and piecewise continuous with respect to the adversaries' trajectories.}

\subsection{Adversary Model}
The adversary model studied in this paper has two aspects: the {\it threat model} and the {\it scope of threat assumption}. %The threat model defines the behavioral limitations of the individual compromised nodes. The scope of threat model defines the topological constraints of the adversary model. The scope of threat model may stipulate the number of adversaries (nodes), or the number of interactions among other nodes (directed edges) that are allowed under the model.

\subsubsection{Threat Model}
The threat model defines the types of behaviors allowed by individual adversary nodes. %The behavior of the adversaries may be qualified in terms of limitations on their dynamics (or state trajectories) as well as how they are allowed to convey information, or more generally, how they interact with other agents in the network.
% crash adversaries
The least general threat is the {\it crash adversary}, which is inspired by the \emph{crash fault} studied in mobile robotics~\cite{Agmon06_FaultTolGathAlgs4AutMobRobots}. %The inspiration for the crash adversary is the \emph{crash fault}, studied in mobile robotics~\cite{Agmon06_FaultTolGathAlgs4AutMobRobots, Defago06_FTSSMobileRobGather}.
As a fault model, crash-faulty robots fail by simply stopping. Analogously, a crash adversary behaves normally until it is crashed and once crashed, stops changing its state. The crash adversary determines when the node is crashed, but otherwise cannot modify the state of the compromised agent or the values conveyed to other nodes. Crash adversaries -- like all adversary models studied here -- are assumed to be omniscient (i.e., they know all other states and the full network topology; they are aware of the update rules $f_{i,\sigma(t)}(\cdot)$, $\forall i\in\mathcal{N}$; they are aware of which other agents are adversaries; and they know the plans of the other adversaries\footnote{One may take the viewpoint that a centralized omniscient adversary informs and directs the behavior of the individual adversary agents.}). For this reason, the worst case crash times for the adversaries should be considered. This behavior is summarized in the following definition.%when analyzing multi-agent networks with crash adversaries. This behavior is summarized in the following definition.

\begin{definition}[Crash Adversary]
\label{D:CrashNode}
An agent $k\in\mathcal{A}$ is a \textbf{crash adversary} (or simply \textbf{crash node}) if it is omniscient, and there exists $t_k\in\mathbb{R}_{\geq0}$ (selected by the adversary), such that
\begin{itemize}
 \item agent $k$ behaves normally before $t=t_k$, according to its prescribed update rule, i.e.,
 \begin{equation*}
 \dot{x}_k=f_{k,\sigma(t)}(t,x_{\mathcal{N}},x_{(\mathcal{A},k)}) \text{ for all } t<t_k;
 \end{equation*}
 \item agent $k$ stops changing its state for all $t\geq t_k$, i.e., $x_k(t)=x_k(t_k)$ for all $t\geq t_k$;
 \item agent $k$ conveys the same state to each out-neighbor, i.e., $x_{(k,i)}\equiv x_{(k,j)}$ for all $i,j \in \mathcal{N}^{\text{out}}_k$.
\end{itemize}
\end{definition}

The crash adversary is similar to the fault attack model described in \cite{Zeng2012_SecureDistCtrlInUnreliableDNCS}. The fault attack assumes that the state of the attacked node remains constant, as with a crashed node; however, the constant value imposed by the attack may be arbitrary instead of being fixed at the state value immediately before the attack.

 % Byzantine adversaries
Conversely, the most general threat studied here is the {\it Byzantine adversary}. The Byzantine adversary is motivated by Byzantine faulty nodes studied in distributed computing~\cite{Lamport82_ByzGeneralProb,lynch96:dist_algs}, communication networks~\cite{Hromkovic05_DissOfInfoInCommNets,Jaggi2007_ResilientNetCodInPresOfByzAdv}, and mobile robotics~\cite{Agmon06_FaultTolGathAlgs4AutMobRobots}. Byzantine nodes may behave arbitrarily (under a continuity constraint), are omniscient, and are capable of duplicity, i.e., the values conveyed to their out-neighbors are not necessarily the same. %These behaviors make Byzantine faults adversarial in nature. Therefore, no modification to the Byzantine model is needed for its interpretation as an adversary.
%However, for consistency with the dynamic network model, we assume that the Byzantine adversaries may temporarily remove outgoing edges for any period of time in which it chooses not to convey its state. Because the switching signal is piecewise constant and satisfies a dwell time assumption, this limits the switching behavior an adversary is capable of inducing. A limitation to this model of a Byzantine agent's capability of selectively withholding information is that in time-invariant networks, whenever $\mathcal{D}_{\sigma(t)}\equiv\mathcal{D}$, this capability is effectively eliminated.

 % Malicious adversaries
The {\it malicious adversary} is essentially a Byzantine node restricted to a local broadcast model of communication. %conveying information.
A malicious adversary may behave arbitrarily and is omniscient. However, malicious nodes are incapable of duplicity, i.e., every out-neighbor receives the same information.

Malicious nodes have been studied in the detection and identification of misbehaving nodes in discrete-time linear consensus networks~\cite{Sundaram2011_DistFunctCalcViaLinItStratInPresOfMalAgs, Pasqualetti2011_ConsCompInUnrelNets}. In these works, a malicious node is modeled by introducing a disturbance on its input that allows the malicious node to modify its state arbitrarily. To identify the malicious nodes, normal nodes use nonlocal topological information concerning the (time-invariant) network to `invert' the consensus dynamics of the network.
%Clearly, for this approach to be successful it is necessary for a malicious node to behave abnormally at some point in time. Otherwise, the malicious nodes cannot be distinguished from normal ones. For this reason, it is commonly assumed in these works that malicious nodes {\it cannot always behave normally}. We do not require this restriction here.

 % continuity assumption
A technical assumption for malicious and Byzantine agents deals with the continuity of the state trajectories of the adversaries. Technically, piecewise continuity of the trajectories of $x_{(\mathcal{A},\mathcal{N})}(t)$ (combined with certain regularity conditions on $f_{\sigma(t)}(\cdot)$)  is sufficient for existence of solutions to (\ref{E:CoopAgents}). However, the trajectories of the normal agents are continuous; therefore, it is feasible that normal agents could use discontinuities in the state trajectories to detect adversaries. Thus, we restrict the trajectories of Byzantine adversaries to be continuous for all $t$. The behavior of Byzantine and malicious agents are summarized as follows.

\begin{definition}[Byzantine and Malicious Agents]
\label{D:ByzantineMaliciousNode}
An agent $k\in\mathcal{A}$ is a \textbf{Byzantine} or \textbf{malicious adversary} if it is omniscient, and
\begin{itemize}
	\item agent $k$'s state trajectories intended for other nodes, $\{x_{(k,i)}(t)\colon i\in\mathcal{V}\}$, are continuous functions of time on $[0,\infty)$;
	\item Byzantine agent $k$'s state trajectory intended for $i$ may be different than the one intended for $j$, i.e.,  $x_{(k,i)}(t) \neq x_{(k,j)}(t)$ is allowed for some $i,j\in\mathcal{V}$;
	\item malicious agent $k$'s state trajectory intended for $i$ must be the same as the one intended for $j$, i.e., $x_{(k,i)}(t) \equiv x_{(k,j)}(t), \forall i,j\in\mathcal{V}$.
\end{itemize}
\end{definition}

\subsubsection{Scope of Threats}
The scope of threat model defines the topological assumptions placed on the adversaries. To account for varying degrees of different nodes, we introduce a fault model that considers an upper bound on the \emph{fraction} of adversaries in any node's neighborhood. This is called the \emph{$f$-fraction local model}.

\begin{definition}[$f$-Fraction Local Set and Threat Model]
A set $\mathcal{S} \subset \mathcal{V}$ is \textbf{$f$-fraction local} if and only if it contains at most a fraction $f$ of agents in the neighborhood of the other agents for all $t$, i.e., $\rvert \mathcal{N}^{\text{in}}_i(t)\bigcap \mathcal{S}\rvert \le \lfloor f \rvert \mathcal{N}^{\text{in}}_i(t)\rvert \rfloor$, $\forall i\in \mathcal{V}\setminus\mathcal{S}$, $f\in[0,1]$. The \textbf{$f$-fraction local model} refers to the case when the set of adversaries is an $f$-fraction local set. %A set of adversary nodes $\mathcal{A}$ is \textbf{$f$-fraction locally bounded} if and only if it is an $f$-fraction local set. We refer to this scope of threat model as the \textbf{$f$-fraction local model}.
\end{definition}

It should be emphasized that in time-varying network topologies, the property defining an $f$-fraction local set must hold for all points in time. The $f$-fraction local model is inspired from ideas pertaining to {\it contagion} in social and economic networks~\cite{Easley10_NetCrowdAndMarket}, where a node accepts some new information (behavior or technology) if more than a certain fraction of its neighbors has adopted it.  A scope of threat model similar to the $f$-fraction local model is proposed in \cite{Li06_GlobalClockSynchInSNs} for hierarchical networks to address the problem of resilient clock synchronization in the presence of Byzantine nodes.

%A scope of threat model similar to the $f$-fraction local model is proposed in \cite{Li06_GlobalClockSynchInSNs} for resilient clock synchronization in the presence of Byzantine nodes. The network in \cite{Li06_GlobalClockSynchInSNs} is hierarchical and there are two types of nodes:  tamper-proof nodes and normal nodes. Either type of nodes may be compromised, but a tamper-proof node can be trusted because it will destroy itself upon being compromised. The tamper-proof nodes act as leaders of clusters of nodes, where it is assumed that (a) each normal node belongs to a cluster whose head is tamper-proof, (b) any two tamper-proof nodes are within two hops from one another, and (c) at most one-third of the shared neighbors of any pair of tamper-proof nodes can be compromised. Assumption (c) differs from the $\tfrac{1}{3}$-fraction local model because in the latter, at most one-third of {\it all} neighbors of {\it any} normal node may be compromised. The $f$-fraction local model does not assume a hierarchical network. Instead, the normal nodes are effectively identical. Notice that in the hierarchical network of \cite{Li06_GlobalClockSynchInSNs}, it is possible that the compromise of a single tamper-proof node results in some uncompromised normal nodes that cannot participate in the consensus process. This precludes the possibility of achieving resilient asymptotic consensus as defined in this work (since {\it all} normal nodes must reach consensus).
%
\subsection{Resilient Asymptotic Consensus}
The Continuous-Time Resilient Asymptotic Consensus (CTRAC) problem is a continuous-time analogue to the {\it Byzantine approximate agreement problem}~\cite{lynch96:dist_algs, Dolev86_ReachingAppAgreeInPresOfFaults}, and is defined as follows. The quantities $M_{\mathcal{N}}(t)$ and $m_{\mathcal{N}}(t)$ are the {\it maximum} and {\it minimum} values of the normal agents at time $t$, respectively.
%That is,
%\begin{equation*}
%M_{\mathcal{N}}(t)=\max_{i\in\mathcal{N}}\{x_i(t)\},
%\end{equation*}
%and
%\begin{equation*}
%m_{\mathcal{N}}(t)=\min_{i\in\mathcal{N}}\{x_i(t)\}.
%\end{equation*}
\begin{definition}[CTRAC]
\label{D:CTRAC}
%\hspace{-0.15cm}
The normal agents are said to achieve \textbf{continuous-time resilient asymptotic consensus} (CTRAC) in the presence of adversary agents (given a particular adversary model) if
\begin{enumerate}[(i)]
\item $\exists L\in\mathbb{R}$ such that $\lim_{t\rightarrow\infty}x_i(t)=L$ for all $i\in\mathcal{N}$;
\item $x_i(t)\in\mathcal{I}_0=[m_{\mathcal{N}}(0), M_{\mathcal{N}}(0)]$, $\forall t\in\mathbb{R}_{\geq0}$, $i\in\mathcal{N}$,%(i.e., $\mathcal{I}_0$ is an invariant set for each normal agent),
\end{enumerate}
for any choice of initial values $x_{\mathcal{N}}(0)\in\mathbb{R}^N$.
\end{definition}

The CTRAC problem is defined by two conditions, agreement and safety, along with the type of adversary considered. Condition $(i)$ is an \textit{agreement condition} that requires the states of the normal agents to converge to a common limit, the consensus value, despite the influence of the adversaries.
%Condition $(ii)$ is a \textit{safety condition} that requires the state trajectory of each normal agent to be contained in the interval formed by the initial states of the normal agents, despite the influence of the adversaries. %Equivalently, the safety condition can be stated in terms of $x_{\mathcal{N}}$. Let $\mathcal{H}_0 = \mathcal{I}_0^N \subset \mathbb{R}^N$ denote the hypercube formed by the Cartesian product of $N$ copies of $\mathcal{I}_0$. Then the safety condition requires $x_{\mathcal{N}}(t) \in \mathcal{H}_0$ for all $t\geq 0$, despite the influence of the adversaries.
The safety condition in $(ii)$ ensures that the value chosen by each normal agent lies within the range of `good' values. This is important in safety critical applications, whenever $\mathcal{I}_0$ is a known safe set. 

%%%%%%%%%%%%%%%%%%%%%%%%%%%%%%%%%%%%%%%%%%%%%%%%%%%%%%%%%%%%%%%%%%%%%%%%%%%%%%%%%%%%%%%%%%%%%%%%%%%
%Algorithm

\section{\uppercase{Resilient Consensus Algorithm}}
\label{S:Algorithm}
Linear consensus algorithms have been extensively studied in the control community for the last few years~\cite{OlfatiSaber07_ConsensusAndCoopInNetMultiagentSys}. In such strategies, at time $t$, each node senses or receives information from its neighbors, and changes its value according to the Linear Consensus Protocol (LCP):
\begin{equation}
\label{E:LCP}
\dot{x}_i(t)=\sum_{j\in \mathcal{N}^{\text{in}}_i(t)}w_{(j,i)}(t)\left(x_{(j,i)}(t)-x_i(t)\right),
\end{equation}
where $x_{(j,i)}(t)-x_i(t)$ is the relative state of agent $j$ with respect to agent $i$ and $w_{(j,i)}(t)$ is a piecewise continuous weight assigned to the relative state at time $t$.

Different conditions have been reported in the literature to ensure that asymptotic consensus is reached~\cite{Ren05_ConsSeekInMultiAgentSysUnderDynChangIntTop,Moreau04_StabCTDistConsAlgs,Jadbabaie03_CoordOfGroupsOfMobileAutAgentsUsingNearNeighRule}.  It is common to assume that the weights are nonnegative, uniformly bounded, and piecewise continuous. That is, there exist constants $\alpha, \beta \in \mathbb{R}_{>0}$, with $\beta\geq\alpha$, such that the following conditions hold:
\begin{itemize}
\item $w_{(j,i)}(t)=0$ whenever $j\not\in \mathcal{N}^{\text{in}}_i(t), \forall i\in\mathcal{N}$, $t\in\mathbb{R}_{\geq0}$;
\item $\alpha \leq w_{(j,i)}(t)\leq\beta$, $\forall j\in \mathcal{N}^{\text{in}}_i(t), i\in\mathcal{N}, t\in\mathbb{R}_{\geq0}$.
\end{itemize}

 % problem with linear update
One problem with LCP given in (\ref{E:LCP}) is that it is not resilient to misbehaving nodes. In fact, it is shown in \cite{Jadbabaie03_CoordOfGroupsOfMobileAutAgentsUsingNearNeighRule, Gupta06_OnRobustnessOfDistAlgs} that a single `leader' node can cause all agents to reach consensus on an arbitrary value of its choosing simply by holding its value constant.
% ARC-P2 has weights as in LCP, but instead of trusting every neighbor by using every value in the update, the normal agent effectively sets some of the weights (temporarily) to zero.

\subsection{Description of ARC-P with Parameter $f$}
\label{S:ARCP2Description}
The Adversarial Robust Consensus Protocol (ARC-P) with parameter $F\in\mathbb{Z}_{\geq0}$ was introduced in \cite{LeBlanc11_ConsNetMASWithAdv} and extended in \cite{LeBlanc2012_ConsOfMANsInPresOfAdvUsingOnlyLocalInfo} to deal with the $F$-total and $F$-local scope of threat models.  By removing the extreme values with respect to the node's own value (the $F$ largest and $F$ smallest values), ARC-P with parameter $F$ is able to achieve resilient asymptotic consensus~\cite{LeBlanc2012_ConsOfMANsInPresOfAdvUsingOnlyLocalInfo}. Under the $f$-fraction local model, a minor modification to this protocol is needed. In this case, the parameter $f\in[0,1/2]$ determines the fraction of neighboring values to view as extreme. For example, if $f=1/3$, then ARC-P with parameter $f$ removes the largest and smallest one third of the neighboring values.

 % time-varying weights
%ARC-P with parameter $f$ has time-varying, piecewise continuous, nonnegative, uniformly bounded weights $w_{(j,i)}(t)\in\mathbb{R}_{\geq0}$ associated to each pair $(j,i)\in\mathcal{V}\times\mathcal{N}$. The weights are uniformly bounded above by $\beta\in\mathbb{R}_{>0}$. We define (without loss of generality) $w_{(j,i)}(t)\equiv0$ for $j\notin\mathcal{N}^{\text{in}}_i(t)$. Whenever $j\in\mathcal{N}^{\text{in}}_i(t)$, there is the uniform lower bound $\alpha\in\mathbb{R}_{>0}$, so the weights are bounded as $0<\alpha\leq w_{(j,i),k}(t)\leq\beta$.

For describing the algorithm, let $F_i(t)=\lfloor fd_i(t)\rfloor$.  Whenever the normal nodes assume the $f$-fraction local model, at most $\lfloor fd_i(t) \rfloor$ of node $i$'s neighbors may be compromised, and the parameter used is $f$.\footnote{Of course, if the scope of threat model {\it assumed} (i.e., at design time) is not the {\it true} scope of threat, then ARC-P may fail to achieve consensus.} %Since node $i$ is unsure of {\it which} neighbors may be compromised, it removes the extreme values with respect to its own value.
The following steps describe ARC-P with parameter $f$.

\begin{enumerate}
\item At time $t$, each normal node $i$ obtains the values of its in-neighbors, and forms a sorted list.
\item If there are less than $F_i(t)$ values strictly larger (smaller) than its own value, $x_i(t)$, then normal node $i$ removes all values that are strictly larger (smaller) than its own. Otherwise, it removes precisely the largest (smallest) $F_i(t)$ values in the sorted list.\footnote{Ties may be broken arbitrarily. However, it is required that the algorithm is able to match the correct weights to the values kept.}
    %Likewise, if there are less than $F_i(t)$ values strictly smaller than its own value, then node $i$ removes all values that are strictly smaller than its own. Otherwise, it removes precisely the smallest $F_i(t)$ values.
\item Let $\mathcal{R}_i(t)$ denote the set of nodes whose values are removed by normal node $i$ in step 2 at time $t$. Each normal node $i$ applies the update\footnote{Note that if all neighboring values are removed, then $\dot{x}_i(t)=0$. }
\begin{equation}
\dot{x}_i(t)=\hspace{-0.3cm}\sum_{j\in \mathcal{N}^{\text{in}}_i(t)\setminus\mathcal{R}_i(t)}w_{(j,i)}(t)\left(x_{(j,i)}(t)-x_i(t)\right).
\label{E:ARCP2Update}
\end{equation}
%\begin{align}
%\dot{x}_i(t)&=f_{i,\sigma(t)}(t,x_{\mathcal{N}}, x_{(\mathcal{A},i)}) \nonumber \\
%&=\hspace{-0.3cm}\sum_{j\in \mathcal{N}^{\text{in}}_i(t)\setminus\mathcal{R}_i(t)}w_{(j,i)}(t)\left(x_{(j,i)}(t)-x_i(t)\right).
%\label{E:ARCP2Update}
%\end{align}
%Observe that the dynamics of (\ref{E:ARCP2Update}) define a switched system without impulse effects, so the trajectory of any solution is absolutely continuous~\cite{Liberzon_SwitchingInSysAndCtrl}.
%where the weights $w_{(j,i)}(t)$ nonnegative, uniformly bounded, and piecewise continuous (i.e., $\alpha \leq w_{(j,i)}(t)\leq\beta$).  %\footnote{In this case, a simple choice for the weights is to choose $w_{(j,i)}\equiv 1$.}
%Note that if all neighboring values are removed, then $\dot{x}_i(t)=0$.
\end{enumerate}

%As a matter of terminology, we refer to the bound on the number of larger or smaller values that could be thrown away as the {\it parameter} of the algorithm. %largest number of values that a node could throw away as the {\it parameter} of the algorithm.
%Above, the parameter of ARC-P under the $F$-local and $F$-total models is $F$.

The set of nodes removed by normal node $i$, $\mathcal{R}_i(t)$, is possibly time-varying. Thus, even if the underlying network is fixed, ARC-P effectively induces switching behavior, which can be viewed as the linear update of (\ref{E:LCP}) with the rule given in step 2 for state-dependent switching.

%%%%%%%%%%%%%%%%%%%%%%%%%%%%%%%%%%%%%%%%%%%%%%%%%%%%%%%%%%%%%%%%%%%%%%%%%%%%%%%%%%%%%%%%%%%%%%%%%%%%%
% Fractional Robustness
\section{\uppercase{Fractional Network Robustness}}
\label{S:FracRobust}
%It has been shown that traditional graph theoretic metrics are not suitable for characterizing the convergence of algorithms such as ARC-P (which do not use information about the global network topology)~\cite{Zhang2012_RobInfoDiffAlgs2LocBdAdv, LeBlanc_LowCompResConsAdv_HSCC12}. %using only local information~\cite{Zhang2012_RobInfoDiffAlgs2LocBdAdv, LeBlanc_LowCompResConsAdv_HSCC12}.
%To analyze such algorithms, the notions of edge reachability and network robustness have been introduced~\cite{Zhang2012_RobInfoDiffAlgs2LocBdAdv}, along with several variations~\cite{LeBlanc2012_ConsOfMANsInPresOfAdvUsingOnlyLocalInfo, Zhang2012_ASimpleMedianBasedRCA}.
Network robustness captures a notion of local redundancy of information flow in the network that is well suited for scope of threat models with absolute bounds on the number of adversaries in a normal node's neighborhood~\cite{LeBlanc2012_ResCoopCtrlNetMAS}. For the $f$-fraction local model, there is no absolute bound on the number of neighbors that may be adversaries. Rather, it stipulates a bound on the {\it fraction} of neighbors that can be adversaries. Hence, for the $f$-fraction local model, we require a fractional notion of robustness. First, we define a $p$-fraction edge reachable set.

\begin{definition}
Given a nonempty digraph $\mathcal{D}$ and a nonempty subset $\mathcal{S}$ of nodes of $\mathcal{D}$, we say $\mathcal{S}$ is a \textbf{$p$-fraction edge reachable set} if there exists $i\in \mathcal{S}$ such that $|\mathcal{N}^{\text{in}}_i|>0$ and $|\mathcal{N}^{\text{in}}_i\setminus\mathcal{S}|\geq \lceil p |\mathcal{N}^{\text{in}}_i|\rceil$, where $0\leq p\leq1$. If $|\mathcal{N}^{\text{in}}_i\setminus\mathcal{S}|=0$ for all $i\in\mathcal{S}$, then $\mathcal{S}$ is 0-fraction edge reachable.
\end{definition}

A set $\mathcal{S}$ is $p$-fraction edge reachable, for $p>0$, if it contains a non-isolated node $i$ (i.e., $d_i>0$) that has at least $\lceil pd_i \rceil$ neighbors outside of $\mathcal{S}$. The parameter $p$ quantifies the ratio of influence from neighbors outside $\mathcal{S}$ to neighbors inside $\mathcal{S}$ for {\it at least} one node inside $\mathcal{S}$.
Note that the notion of fraction edge reachability is also called {\it cohesiveness} in the contagion literature~\cite{Easley10_NetCrowdAndMarket}.
%To illustrate this definition, consider Figure~\ref{fig:pFracReachableSet} where node $i\in S$ is a node with at least $\lceil pd_i \rceil$ neighbors outside of $\mathcal{S}$. In the figure, $d_i^{\mathcal{S}}$ and $d_i^{S^c}$ denote the neighbors of $i$ inside and outside $S$, respectively. Given this notation, node $i$ satisfies the inequality
%$$
%\frac{d_i^{S^c}}{d_i}\geq p.
%$$
%Note that if no such $i\in S$ (with $d_i>0$) satisfies this inequality, then $S$ is {\it not} $p$-fraction edge reachable. %In the case where the set $\mathcal{S}$ is isolated (i.e., $\mathcal{N}^{\text{in}}_i\setminus\mathcal{S}=\emptyset$), $\mathcal{S}$ is 0-fraction edge reachable.

%\begin{figure}
%\centering
%\includegraphics[width=4.8cm]{pFracReachableSet.eps}
%%\vspace{-0.5cm}
%\caption{Illustration of a $p$-fraction edge reachable set of nodes.}
%\label{fig:pFracReachableSet}
%\end{figure}

To illustrate $p$-fraction edge reachability, consider the sets $S_1$, $S_2$, and $S_3$ in Figure~\ref{fig:ReachProps}. Each node in $S_1$ has $3/5$ of its neighbors outside $S_1$; so $S_1$ is $\tfrac{3}{5}$-fraction edge reachable. Node 8 has $5/6$ of its neighbors outside of $S_2$, and node 9 only has $4/5$ of its neighbors outside of $S_2$. Thus, $S_2$ is $\tfrac{5}{6}$-fraction edge reachable. Lastly, $S_3$ is a non-isolated singleton, so $S_3$ is 1-fraction edge reachable.

\begin{figure}
\centering
\includegraphics[width=4.8cm]{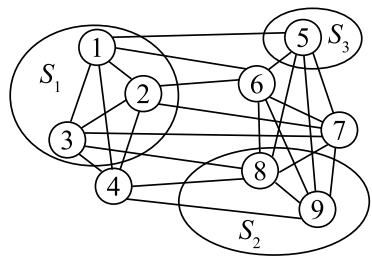} %was Counterexample.eps
\caption{Graph for illustrating edge reachability properties.}
\label{fig:ReachProps}
\end{figure}

The $p$-fraction edge reachability property is defined with respect to a specific set of nodes. We now use this concept to define a network-wide property as follows.
%To demonstrate fractional robustness of the network requires checking fractional edge reachability properties of every nonempty, disjoint pair of subsets of nodes. For each pair, at least one of the subsets must be $p$-fraction edge reachable. This is the definition of $p$-fraction robustness. %The following definition generalizes this notion of fractional redundancy to the entire network.

\begin{definition}[$p$-fraction robustness]
A nonempty, nontrivial digraph $\mathcal{D}=(\mathcal{V},\mathcal{E})$ is \textbf{$p$-fraction robust}, with $0\leq p \leq 1$, if for every pair of nonempty, disjoint subsets of $\mathcal{V}$, at least one of the subsets is $p$-fraction edge reachable. If $\mathcal{D}$ is empty or trivial, then $\mathcal{D}$ is 0-fraction robust.
\label{def:p_frac_robust}
\end{definition}

%%%%%%%%%%%%%%%%%%%%%%%%%%%%%%%%%%%%%%%%%%%%%%%%%%%%%%%%%%%%%%%%%%%%%%%%%%%%%%%%%%%%%%%%%%%%%%%%%%%%%%
% Analysis
\section{\uppercase{Resilient  Consensus  Analysis}}
\label{S:Results}
We demonstrate in this section how fractional robustness is a useful property for analyzing ARC-P with parameter $f$ under the $f$-fraction local model. More specifically, we show that $f$-fraction robustness is necessary in the presence of crash adversaries and $2f$-fraction robustness is sufficient in the presence of Byzantine adversaries. First, we consider the safety condition. %For Byzantine adversaries, it is necessary and sufficient for the normal network (i.e., the network containing only the normal nodes and directed edges whose head and tail are both normal) to be $f$-fraction robustness of the normal network is necessary and $p'$-fraction robustness is sufficient, if $p'>f$. But first, we consider the safety condition

\begin{lemma}
\label{L:BoundAveragePhi}
Consider the normal agent $i \in \mathcal{N}$ using ARC-P with parameter $f\in[0,1/2]$ under the $f'$-fraction local (Byzantine) model, where $f'\leq f$. Define $F_i(t)=\lfloor fd_i(t) \rfloor$. Then, for each $t \in \mathbb{R}_{\geq0}$,
\begin{align*}
B(m_{\mathcal{N}}(t)-x_i(t)) &\leq f_{i,\sigma(t)}(t,x_{\mathcal{N}}, x_{(\mathcal{A},i)})\\
&\leq B(M_{\mathcal{N}}(t)-x_i(t)),
\end{align*}
where $B=\beta(n-1-\min_{i\in\mathcal{N},t\geq 0}\{F_i(t)\})$, which implies that the safety condition of the CTRAC problem is ensured.
\end{lemma}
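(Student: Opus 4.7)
The plan is to show that after the ARC-P filter runs, every value left in $x_{(j,i)}(t)$ for $j\in\mathcal{N}^{\text{in}}_i(t)\setminus\mathcal{R}_i(t)$ lies in the safe interval $[m_{\mathcal{N}}(t),M_{\mathcal{N}}(t)]$; once this is proved, the two-sided bound on (\ref{E:ARCP2Update}) and the safety conclusion follow by counting and by a standard invariant-set argument.

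For the key claim, I first dispose of the normal surviving neighbors, for which $x_{(j,i)}(t)=x_j(t)\in[m_{\mathcal{N}}(t),M_{\mathcal{N}}(t)]$ by definition. For a Byzantine survivor $j$ with $x_{(j,i)}(t)>x_i(t)$, suppose for contradiction that $x_{(j,i)}(t)>M_{\mathcal{N}}(t)$. Because $j\notin\mathcal{R}_i(t)$, the filter must be in the branch that removes exactly the $F_i(t)$ largest values (otherwise every value strictly above $x_i(t)$, including $x_{(j,i)}(t)$, would have been removed). Each of those $F_i(t)$ removed values is $\geq x_{(j,i)}(t)>M_{\mathcal{N}}(t)$, so none of them can be a normal agent. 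Counting $j$ itself gives at least $F_i(t)+1$ adversaries in $\mathcal{N}^{\text{in}}_i(t)$, contradicting the $f'$-fraction local assumption since $|\mathcal{N}^{\text{in}}_i(t)\cap\mathcal{A}|\leq\lfloor f'd_i(t)\rfloor\leq\lfloor fd_i(t)\rfloor=F_i(t)$ (the hypothesis $f'\leq f$ enters here). A symmetric argument using the bottom-$F_i(t)$ removal handles $x_{(j,i)}(t)<x_i(t)$ and the lower endpoint $m_{\mathcal{N}}(t)$.

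With the claim in hand, I would split the sum in (\ref{E:ARCP2Update}) into its non-negative and non-positive summands. Dropping the non-positive part and invoking the key claim bounds each positive summand by $\beta(M_{\mathcal{N}}(t)-x_i(t))$. The top-$F_i(t)$ stripping guarantees that the number of surviving neighbors with $x_{(j,i)}(t)>x_i(t)$ is at most $d_i(t)-F_i(t)\leq n-1-\min_{i,t}F_i(t)$, yielding the upper bound $B(M_{\mathcal{N}}(t)-x_i(t))$; the lower bound follows by symmetry. Safety is then a standard Dini-derivative observation: the upper bound forces $\dot{x}_i(t)\leq 0$ whenever $x_i(t)=M_{\mathcal{N}}(t)$, and the lower bound forces $\dot{x}_i(t)\geq 0$ whenever $x_i(t)=m_{\mathcal{N}}(t)$, so $M_{\mathcal{N}}(\cdot)$ is nonincreasing and $m_{\mathcal{N}}(\cdot)$ is nondecreasing, keeping $x_i(t)\in\mathcal{I}_0$ for all $t\geq 0$ and all $i\in\mathcal{N}$.

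The main obstacle I expect is the careful bookkeeping across the two branches of the filter together with the arbitrary tie breaking: the contradiction step relies on the top-$F_i(t)$ removed set being well defined and pointwise $\geq$ every surviving value, which is true under any tie-breaking rule. The continuity of the Byzantine state trajectories from Definition \ref{D:ByzantineMaliciousNode} is also implicitly needed so that the pointwise comparisons with $M_{\mathcal{N}}(t)$ and $m_{\mathcal{N}}(t)$ remain valid across the switching instants of $\sigma(t)$.
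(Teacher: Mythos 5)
Your proposal is correct and follows essentially the same route as the paper's proof: show every surviving value lies in $[m_{\mathcal{N}}(t),M_{\mathcal{N}}(t)]$ by the same counting contradiction (a surviving value above $M_{\mathcal{N}}(t)$ forces $F_i(t)+1$ adversarial neighbors, violating the $f'$-fraction local bound via $\lfloor f'd_i(t)\rfloor\leq F_i(t)$), then bound the surviving sum by $B(M_{\mathcal{N}}(t)-x_i(t))$ and $B(m_{\mathcal{N}}(t)-x_i(t))$ and conclude safety from monotonicity of $M_{\mathcal{N}}$ and $m_{\mathcal{N}}$. Your write-up is in fact slightly more explicit than the paper's on the two filter branches, the tie-breaking, and the Dini-derivative step that the paper compresses into ``continuity of solutions implies the result.''
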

\begin{proof}
If no neighboring values are used, or all values used are equal to $x_i(t)$ at time $t$, then $f_{i,\sigma(t)}(t,x_{\mathcal{N}}, x_{(\mathcal{A},i)})=0$, and the inequality holds. Therefore, assume at least one value not equal to $x_i(t)$ is used in the update at time $t$, say $x_{(j,i)}(t)$. Suppose $x_{(j,i)}(t)>M_{\mathcal{N}}(t)$. Then, by definition $j$ must be an adversary and $x_{(j,i)}(t)>x_i(t)$. Since $i$ uses $x_{(j,i)}(t)$ at time $t$, there must be at least $F_i(t)$ more agents in the neighborhood of $i$ with values at least as large as $x_{(j,i)}(t)$. Hence, these agents must also be adversaries, which contradicts the assumption of at most $F_i(t)$ adversary agents in the neighborhood of $i$ at time $t$. Thus, $x_{(j,i)}(t)\leq M_{\mathcal{N}}(t)$. Similarly, we can show that $x_{(j,i)}(t)\geq m_{\mathcal{N}}(t)$. Since there are at most $n-1$ neighbors of $i$, at least $F_i(t)$ values are removed or equal to $x_i(t)$ (since $d_i(t)>F_i(t)$), and $w_{(j,i)}(t)\leq\beta$ for all $j\in\mathcal{N}^{\text{in}}_i(t)$, it follows that
\begin{align*}
B(m_{\mathcal{N}}(t)-x_i(t))&\leq \hspace{-0.3cm}\sum_{j\in\mathcal{N}^{\text{in}}_i(t)\setminus\mathcal{R}_i(t)}\hspace{-0.3cm}w_{(j,i)}(t)(x_{(j,i)}(t)-x_i(t))\\
&\leq B(M_{\mathcal{N}}(t)-x_i(t)).
\end{align*}
Finally, the fact that any solution of (\ref{E:CoopAgents}) using (\ref{E:ARCP2Update}) is continuous combined with the above inequality implies the result.
\end{proof}
% End of Proof

%Next, we show that the safety condition holds whenever the parameter $f$ for ARC-P is sufficiently large.
%
%\begin{lemma}
%\label{L:HypercubeInv}
%Suppose the normal agents in $\mathcal{N}$ use ARC-P with parameter $f\in[0,1/2]$ under the $f'$-fraction local model where $f'\leq f$. Then, the safety condition of the CTRAC problem is ensured.
%%Suppose the normal agents in $\mathcal{N}$ use ARC-P with parameter $f\in[0,1/2]$ under the $f'$-fraction local model where $f'\leq f$. Then, the hypercube $\mathcal{H}_0 = [m_{\mathcal{N}}(0),M_{\mathcal{N}}(0)]^N$ defined by
%%\begin{equation*}
%%\mathcal{H}_0=\{ y \in \mathbb{R}^N \colon m_{\mathcal{N}}(0) \leq y_i \leq M_{\mathcal{N}}(0), \ i=1,2,\dotsc,N \},
%%\end{equation*}
%%is robustly positively invariant for the system of normal agents. Therefore, under these assumptions, the safety condition of the CTRAC problem is ensured.
%\end{lemma}
%%
%\begin{proof}
%The fact that any solution of (\ref{E:CoopAgents}) using (\ref{E:ARCP2Update}) is continuous combined with Lemma~\ref{L:BoundAveragePhi} implies the result.
%\end{proof}

%Lemma~\ref{L:HypercubeInv} shows that ARC-P with parameter $f$ ensures the safety condition of CTRAC holds under the $f$-fraction local model.
It follows from Lemma~\ref{L:BoundAveragePhi} that $M_{\mathcal{N}}(\cdot)$ is nonincreasing and $m_{\mathcal{N}}(\cdot)$ is nondecreasing, respectively. Therefore, if agreement is achieved among the normal agents, then the values of the normal agents must converge to a common limit. For this reason, we focus on proving that the Lyapunov candidate $\Psi(t)=M_{\mathcal{N}}(t)-m_{\mathcal{N}}(t)$ vanishes asymptotically (i.e., agreement is achieved). We show that this Lyapunov function decreases over sufficiently large time intervals whenever the normal nodes update their values according to ARC-P, provided the network is sufficiently robust. Before that, we first provide a necessary condition for the crash model in time-invariant networks.
%%%%%%%%%%%%%%%%%%%%%%%%%%%%%%%%%%%%%%%%%%%%%%%%%%%%%%%%%%%%%%%%%%%%%%%%%%%%%%%%%%%%%%%%%%%%%%%%%
\subsection{Necessary Condition}
\label{S:ARCP2NecessaryConditions}
%In this subsection, we present a necessary condition for ARC-P with parameter $f$ to succeed in time-invariant networks. %The result shows that in the case of parameter $f$ under the $f$-fraction local crash model, $f$-robustness is necessary.

\begin{theorem}
\label{T:ARCP2NecessaryfFrac}
Consider a time-invariant network modeled by digraph $\mathcal{D} = (\mathcal{V},\mathcal{E})$ where each normal node updates its value according to ARC-P with parameter $f\in[0,1/2]$ under the $f$-fraction local crash model. If CTRAC is achieved, then $\mathcal{D}$ is $f$-fraction robust.
\end{theorem}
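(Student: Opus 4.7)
\bigskip
\noindent\emph{Proof plan.}

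\smallskip
The plan is to argue by contrapositive: assuming $\mathcal{D}$ is not $f$-fraction robust, I will exhibit an initial condition (with the trivial crash behavior --- either an empty adversary set, which vacuously satisfies the $f$-fraction local condition, or crash nodes whose crash times are $0$) for which two disjoint groups of normal agents hold fixed, distinct values forever. This forces $M_{\mathcal{N}}(t)-m_{\mathcal{N}}(t)$ to stay bounded away from zero, violating the agreement condition of CTRAC.

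\smallskip
Since $\mathcal{D}$ is not $f$-fraction robust, Definition~\ref{def:p_frac_robust} yields nonempty disjoint subsets $\mathcal{V}_1,\mathcal{V}_2\subset\mathcal{V}$ such that \emph{neither} is $f$-fraction edge reachable. That is, for each $k\in\{1,2\}$ and each $i\in\mathcal{V}_k$, either $d_i=0$ or $|\mathcal{N}^{\text{in}}_i\setminus\mathcal{V}_k|<\lceil f d_i\rceil$. Since this bound is between integers, I would record the (crucial) consequence
\[
|\mathcal{N}^{\text{in}}_i\setminus\mathcal{V}_k|\;\leq\;\lceil f d_i\rceil-1\;\leq\;\lfloor f d_i\rfloor\;=\;F_i,
\]
using the elementary inequality $\lceil x\rceil-1\leq\lfloor x\rfloor$. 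I would then fix reals $a<b$ and set $x_i(0)=a$ for $i\in\mathcal{V}_1$, $x_j(0)=b$ for $j\in\mathcal{V}_2$, and any $x_k(0)\in[a,b]$ for $k\notin\mathcal{V}_1\cup\mathcal{V}_2$. By Lemma~\ref{L:BoundAveragePhi}, every normal agent's trajectory remains in $[a,b]$ for all $t\geq 0$.

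\smallskip
The main step is to show that the candidate trajectory with $x_i(t)\equiv a$ for every $i\in\mathcal{V}_1$ and $x_j(t)\equiv b$ for every $j\in\mathcal{V}_2$ (the remaining nodes evolving according to ARC-P) is actually a solution. Fix any $t$ and any $i\in\mathcal{V}_1$ with $d_i>0$. Its in-neighbors inside $\mathcal{V}_1$ all contribute value $a=x_i(t)$, while the at most $F_i$ in-neighbors outside $\mathcal{V}_1$ contribute values in $[a,b]$, hence all $\geq x_i(t)$. So the number of in-neighbor values \emph{strictly} greater than $x_i(t)$ is at most $F_i$. The filtering rule of ARC-P then guarantees that every value $>a$ is removed: in Case~1 ($<F_i$ strictly larger) all are deleted by definition, and in Case~2 (exactly $F_i$ strictly larger) the top-$F_i$ removal deletes precisely them, because the $F_i$-th largest value is strictly greater than the $(F_i+1)$-th, so tie-breaking at the boundary is moot. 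The surviving values are all $\leq a$, but safety forces them $\geq a$, so they equal $x_i(t)$, giving $\dot{x}_i(t)=0$. The symmetric argument using the bottom-$F_j$ removal yields $\dot{x}_j(t)=0$ for $j\in\mathcal{V}_2$. Thus the proposed constant trajectories on $\mathcal{V}_1\cup\mathcal{V}_2$ are consistent with ARC-P; since $f_{\sigma(t)}(\cdot)$ is bounded and piecewise continuous on the remaining coordinates, standard ODE arguments furnish a global solution for all agents. Along this solution $M_{\mathcal{N}}(t)-m_{\mathcal{N}}(t)\geq b-a>0$ for every $t$, so the agreement condition of CTRAC is violated.

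\smallskip
The main obstacle I anticipate is the careful bookkeeping around the floor/ceiling conversion $\lceil f d_i\rceil-1\leq\lfloor f d_i\rfloor=F_i$ (which is exactly why the definition of $p$-fraction edge reachability uses a ceiling while ARC-P uses a floor) together with the tie-breaking in the top-$F_i$ removal. Once those are pinned down, the rest is a clean consequence of the safety lemma and uniqueness/existence of solutions.
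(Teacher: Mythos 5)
Your proposal is correct and follows essentially the same route as the paper's proof: negate $f$-fraction robustness to obtain two sets that are not $f$-fraction edge reachable, convert $|\mathcal{N}^{\text{in}}_i\setminus\mathcal{V}_k|<\lceil f d_i\rceil$ into the bound $\lfloor f d_i\rfloor=F_i$, assign initial values $a$ and $b$ with the rest in $[a,b]$, take trivial crash behavior, and observe that ARC-P's filtering removes every outside value so both groups hold their values forever. Your added care about the tie-breaking in the top-$F_i$ removal and the explicit $\lceil x\rceil-1\leq\lfloor x\rfloor$ step are refinements the paper leaves implicit, not a different argument.
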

\begin{proof}
Suppose that $\mathcal{D}$ is not $f$-fraction robust. Then, there exist nonempty, disjoint subsets $\mathcal{S}_1, \mathcal{S}_2 \subset \mathcal{V}$ such that neither $\mathcal{S}_1$ nor $\mathcal{S}_2$ is $f$-fraction edge reachable. This means that $|\mathcal{N}_i^{\text{in}}\setminus\mathcal{S}_k|\leq \lfloor fd_i \rfloor$ for all $i\in\mathcal{S}_k$, $k\in\{1,2\}$. Suppose the initial value of each node in $\mathcal{S}_1$ is $a$ and each node in $\mathcal{S}_2$ is $b$, with $a<b$. Let all other nodes have initial values taken from the interval $[a,b]$. Assume all crashing nodes behave normally all the time. %Assume all nodes are normal.
Then, using ARC-P with parameter $f$, each node $i$ in $\mathcal{S}_1$ removes the $\lfloor fd_i \rfloor$ (or fewer) values greater than $a$ from outside $\mathcal{S}_1$. Likewise, each node $j$ in $\mathcal{S}_2$ removes the $\lfloor fd_j \rfloor$ (or fewer) values less than $b$ from outside $\mathcal{S}_2$. Therefore, each node in $\mathcal{S}_1$ keeps the value $a$ and each node in $\mathcal{S}_2$ keeps the value $b$ for all $t\geq 0$.
\end{proof}

%%%%%%%%%%%%%%%%%%%%%%%%%%%%%%%%%%%%%%%%%%%%%%%%%%%%%%  Sufficient Condition  %%%%%%%%%%%%%%%%%%%%%%%%%%%%%%%%%%%%%%%%%%%%%%%%%%%%%%
\subsection{Sufficient Condition}
\label{S:ARCP2FractLocMal}

We proved in Theorem~\ref{T:ARCP2NecessaryfFrac} that $f$-fraction robustness is a necessary condition for ARC-P with parameter $f$ to achieve CTRAC in time-invariant networks under the $f$-fraction local crash model (and therefore necessary also for the malicious and Byzantine models). We now show that $p$-fraction robustness, with $p>2f$, is sufficient for the $f$-fraction local Byzantine model (and therefore, also sufficient for the malicious and crash models).

\begin{theorem}
\label{T:ARCP2fFracLocalSuff}
Consider a time-invariant network modeled by digraph $\mathcal{D} = (\mathcal{V},\mathcal{E})$ under the $f$-fraction local Byzantine model. Suppose each normal node updates its value according to ARC-P with parameter $f\in[0,1/2)$.  Then, CTRAC is achieved if the network topology is $p$-fraction robust, where $2f<p\leq 1$.
\end{theorem}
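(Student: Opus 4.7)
The plan is to show that the Lyapunov candidate $\Psi(t) := M_{\mathcal{N}}(t) - m_{\mathcal{N}}(t)$ vanishes as $t \to \infty$; combined with the safety bound from Lemma~\ref{L:BoundAveragePhi}, this yields CTRAC. Lemma~\ref{L:BoundAveragePhi} already gives that $M_{\mathcal{N}}(\cdot)$ is nonincreasing and $m_{\mathcal{N}}(\cdot)$ is nondecreasing, so both converge to limits $M^*$ and $m^*$ with $M^* \geq m^*$. It therefore suffices to prove $M^* = m^*$. I argue by contradiction: suppose $\Psi^* := M^* - m^* > 0$, and fix $\epsilon \in (0, \Psi^*/2)$.

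Define the time-varying \textit{extremal sets} $\mathcal{Z}_M^\epsilon(t) := \{i \in \mathcal{N} : x_i(t) > M^* - \epsilon\}$ and $\mathcal{Z}_m^\epsilon(t) := \{i \in \mathcal{N} : x_i(t) < m^* + \epsilon\}$. These are always nonempty (because $M_{\mathcal{N}}(t) \geq M^*$ and $m_{\mathcal{N}}(t) \leq m^*$ for all $t$) and disjoint (by the choice of $\epsilon$). Applying the $p$-fraction robustness of $\mathcal{D}$ (Definition~\ref{def:p_frac_robust}) to the pair $\mathcal{Z}_M^\epsilon(t), \mathcal{Z}_m^\epsilon(t) \subset \mathcal{V}$, at least one of them is $p$-fraction edge reachable; by symmetry assume this is $\mathcal{Z}_M^\epsilon(t)$, with witness $i^* \in \mathcal{Z}_M^\epsilon(t)$ satisfying $|\mathcal{N}^{\text{in}}_{i^*} \setminus \mathcal{Z}_M^\epsilon(t)| \geq \lceil p d_{i^*} \rceil$. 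The $f$-fraction local assumption bounds the adversary in-neighbors of $i^*$ by $F_{i^*} := \lfloor f d_{i^*} \rfloor$. Since $p > 2f$ and $pd_{i^*} > 2fd_{i^*} \geq 2F_{i^*}$ force the integer $\lceil p d_{i^*} \rceil \geq 2 F_{i^*} + 1$, at least $F_{i^*} + 1$ normal in-neighbors of $i^*$ carry values no larger than $M^* - \epsilon$, and in particular strictly less than $x_{i^*}(t)$. Because ARC-P with parameter $f$ discards at most $F_{i^*}$ of the smallest in-neighbor values, at least one such normal neighbor $j$ survives the filter and contributes $w_{(j,i^*)}(t)(x_j(t) - x_{i^*}(t))$, a strictly negative term, to $\dot{x}_{i^*}$.

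Combining this surviving neighbor with the bound $x_{(k,i^*)}(t) \leq M_{\mathcal{N}}(t)$ for every retained value $k$ (as in the proof of Lemma~\ref{L:BoundAveragePhi}), I would derive a differential inequality of the form $\dot{x}_{i^*}(t) \leq \alpha(M^* - \epsilon - x_{i^*}(t)) + \beta(n-2)(M_{\mathcal{N}}(t) - x_{i^*}(t))$ whenever $i^* \in \mathcal{Z}_M^\epsilon(t)$. Since $M_{\mathcal{N}}(t) \to M^*$, this forces each such $x_{i^*}(t)$ to be pushed downward at a rate bounded away from zero beyond some time $T_0$. Iterating this effect, node by node, over the at most $|\mathcal{N}|$ members of $\mathcal{Z}_M^\epsilon$ (or symmetrically those of $\mathcal{Z}_m^\epsilon$), I extract a dwell time $\tau > 0$ and constant $\gamma \in (0,1)$ for which $\Psi(t + \tau) \leq (1-\gamma)\Psi(t)$ for all $t \geq T_0$; iteration then yields $\Psi(t) \to 0$, contradicting $\Psi^* > 0$. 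The principal obstacle is extracting this \emph{uniform} dwell time and rate: the witness $i^*$, the surviving neighbor $j$, and the set $\mathcal{Z}_M^\epsilon(t)$ all vary with $t$ through the state-dependent switching in ARC-P, and the ``up-pushing'' contributions from other retained neighbors with values close to $M_{\mathcal{N}}(t)$ can partially counteract the ``down-push'' from $j$. The finiteness of $\mathcal{V}$ and the compactness of $\mathcal{I}_0$ (by the safety clause of Lemma~\ref{L:BoundAveragePhi}) reduce the argument to a case analysis over the finitely many possible ARC-P filter configurations, from which the uniform bound should follow.
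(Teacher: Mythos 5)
Your overall strategy matches the paper's: argue by contradiction from the limits $M^*,m^*$, use $p$-fraction robustness to produce a witness node in one of the two extremal sets, and use the count $\lceil p d_{i^*}\rceil - 2\lfloor f d_{i^*}\rfloor \geq 1$ to guarantee that at least one normal value from outside the set survives the ARC-P filter. That counting step is correct and is exactly the paper's. However, there is a genuine gap at the step you yourself flag, and your proposed fix does not close it. Because you define $\mathcal{Z}_M^\epsilon(t)$ and $\mathcal{Z}_m^\epsilon(t)$ pointwise in time, the witness $i^*$, the surviving neighbor, and the sets themselves can change at every instant; an instantaneous differential inequality at a single $t$ integrates to nothing unless the same node is certified to keep using an outside value over an interval of positive length. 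Worse, a node that drops below $M^*-\epsilon$ can be pulled back above it by retained neighbors near $M_{\mathcal{N}}(t)$, so membership in $\mathcal{Z}_M^\epsilon(t)$ is not monotone in $t$ and no induction on cardinality is available. A ``case analysis over the finitely many filter configurations'' does not rescue this: finiteness of the configuration set says nothing about the durations or ordering of the configurations along the trajectory, which is what a uniform contraction rate requires. Also note that the target inequality $\Psi(t+\tau)\leq(1-\gamma)\Psi(t)$ is stronger than needed; since $\Psi(t)\geq M^*-m^*$ for all $t$, it suffices to drive $\Psi$ below $M^*-m^*$ once.

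The paper closes this gap with two devices you are missing. First, the extremal sets are defined over time \emph{windows} of length $\Delta<\log(3)/B$: a node belongs to $\mathcal{X}_M^k$ if its value exceeds the threshold at \emph{some} time in $[t_\epsilon+k\Delta,\,t_\epsilon+(k+1)\Delta]$. This makes every neighbor outside the set usable for almost all of the window and lets the differential inequality be integrated to a concrete decrease $x_i(t_\epsilon+\Delta)\leq M_{\mathcal{N}}(t_\epsilon)-\tfrac{\alpha}{B}(1-e^{-B\Delta})\epsilon_0$. Second, the thresholds decay geometrically, $\epsilon_{k+1}=\bigl[\tfrac{\alpha}{B}(1-e^{-B\Delta})e^{-B\Delta}\bigr]^{2}\epsilon_k$, with the decay factor chosen precisely so that (i) a node excluded at stage $k$ cannot re-enter at stage $k+1$ even after exponentially relaxing back toward $M_{\mathcal{N}}$, and (ii) the guaranteed decrease of the witness node survives into the next window. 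These two facts give $|\mathcal{X}_M^{k+1}|+|\mathcal{X}_m^{k+1}|<|\mathcal{X}_M^{k}|+|\mathcal{X}_m^{k}|$ while both sets are nonempty, so one of them empties within $N$ steps, forcing $\Psi(t_\epsilon+T\Delta)<M^*-m^*$ and the contradiction. You would need to replace your pointwise sets and the hoped-for uniform rate with this window-plus-cascading-thresholds construction (or an equivalent mechanism) for the proof to go through.
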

\begin{proof}
We know from Lemma~\ref{L:BoundAveragePhi} that both $M_{\mathcal{N}}(\cdot)$ and $m_{\mathcal{N}}(\cdot)$ are monotone and bounded functions of $t$. Therefore each of them has a limit, denoted by $A_M$ and $A_m$, respectively.   Note that if $A_M=A_m$, then CTRAC is achieved.  We prove by contradiction that this must be the case. The main idea behind the proof is to use the gap between $A_M$ and $A_m$ and combine this with a careful selection of subsets of nodes to show that $\Psi(t)$ will shrink to be smaller than the gap $A_M-A_m$ in finite time (a contradiction).  To this end, suppose that $A_M \ne A_m$ (note that $A_M > A_m$ by definition). Since $M_{\mathcal{N}}(t)\rightarrow A_M$ monotonically, we have $M_{\mathcal{N}}(t)\geq A_M$ for all $t\geq 0$. Similarly, $m_{\mathcal{N}}(t)\leq A_m$ for all $t\geq0$. Moreover, for each $\epsilon>0$ there exists $t_\epsilon>0$ such that $M_{\mathcal{N}}(t) <  A_M + \epsilon$ and $m_{\mathcal{N}}(t) >  A_m - \epsilon$, $\forall t\ge t_{\epsilon}$.  Define constant $\epsilon_0 = (A_M-A_m)/4> 0$.

Next, we define the sets of nodes that are vital to the proof. For any $t_0\geq0$, $t\geq t_0$, $\Delta> 0$, and $\eta>0$, define
\begin{align*}
\mathcal{X}_{M}&(t,t_0,\Delta,\eta)\\
&=\{i\in\mathcal{N}\colon \exists t' \in [t,t+\Delta] \text{ s.t. } x_i(t')>M_{\mathcal{N}}(t_0)-\eta \}
\end{align*}
and
\begin{align*}
\mathcal{X}_{m}&(t,t_0,\Delta,\eta)\\
&=\{i\in\mathcal{N}\colon \exists t' \in [t,t+\Delta] \text{ s.t. } x_i(t')<m_{\mathcal{N}}(t_0)+\eta \}.
\end{align*}

We now proceed by showing that if we choose $\eta$ and $\Delta$ small enough,
then no {\it normal} node can be in both $\mathcal{X}_{M}(t,t_0,\Delta,\eta)$ and $\mathcal{X}_{m}(t,t_0,\Delta,\eta)$ for any $t_0\geq0$ and $t\geq t_0$. First, we require some generic bounds on the normal node trajectories. For $i\in\mathcal{N}$,
we know from Lemma~\ref{L:BoundAveragePhi} that for $\tau\in[t',t]$,
\begin{align*}
\dot{x}_i(\tau)&= \hspace{-0.2cm} \sum_{j\in \mathcal{N}^{\text{in}}_i\setminus\mathcal{R}_i(\tau)} \hspace{-0.1cm}w_{(j,i)}(\tau)\left(x_{(j,i)}(\tau)-x_i(\tau)\right) \\
&\leq B(M_{\mathcal{N}}(t')-x_i(\tau)),
\end{align*}
whenever the derivative exists,\footnote{The solutions of the normal nodes' trajectories are understood in the sense of Carath\'{e}odory. Hence, it is possible that the derivative of the solution does not exist on a set of points in time of Lebesgue measure zero.} where $B=\beta(n-1-\min_{i\in\mathcal{N}, t\geq 0}\{F_i(t)\})$. Using the integrating factor $e^{B(\tau-t')}$, and integrating in the sense of Lebesgue over the time interval $[t',t]$, we have
\begin{equation}
\label{E:NormalFTIncBd}
x_i(t) \leq x_i(t')e^{-B(t-t')}+M_{\mathcal{N}}(t')(1-e^{-B(t-t')}), \enspace \forall t\geq t'.
\end{equation}
By interchanging the roles of $t$ and $t'$, we have
\begin{equation}
\label{E:NormalBTDecBd}
x_i(t) \geq x_i(t')e^{B(t'-t)}+M_{\mathcal{N}}(t)(1-e^{B(t'-t)}), \enspace \forall t\leq t'.
\end{equation}
Similarly, we can show that for $i\in\mathcal{N}$, %with $x_i(t')>m_{\mathcal{N}}(t')$,
\begin{equation}
\label{E:NormalFTDecBd}
x_i(t) \geq x_i(t')e^{-B(t-t')}+m_{\mathcal{N}}(t')(1-e^{-B(t-t')}), \enspace \forall t\geq t',
\end{equation}
and
\begin{equation}
\label{E:NormalBTIncBd}
x_i(t) \leq x_i(t')e^{B(t'-t)}+m_{\mathcal{N}}(t)(1-e^{B(t'-t)}), \enspace \forall t\leq t'.
\end{equation}
Now fix $\eta\leq\epsilon_0=(A_M-A_m)/4$ and $\Delta<\log(3)/B$, and suppose $i\in\mathcal{X}_{M}(t,t_0,\Delta,\eta)$. Then $\exists t'\in[t,t+\Delta]$ such that $x_i(t')>M_{\mathcal{N}}(t_0)-\eta$. Combining this with (\ref{E:NormalFTDecBd}), it follows that for $s\in[t',t+\Delta]$,
\begin{align}\nonumber
x_i(s)&\geq x_i(t')e^{-B(s-t')}+m_{\mathcal{N}}(t')(1-e^{-B(s-t')})\\ \nonumber
&> (M_{\mathcal{N}}(t_0)-\eta)e^{-B(s-t')}+m_{\mathcal{N}}(t_0)(1-e^{-B(s-t')})\\ \nonumber
&\geq (A_M-\eta)e^{-B(s-t')}+m_{\mathcal{N}}(t_0)-A_m e^{-B(s-t')}\\ \nonumber
&\geq m_{\mathcal{N}}(t_0)+(A_M-A_m)e^{-B(s-t')}\\ \nonumber
&\hspace{1cm}-\frac{A_M-A_m}{4} e^{-B(s-t')}\\ \nonumber
&\geq m_{\mathcal{N}}(t_0)+\frac{3}{4}(A_M-A_m)e^{-B\Delta}\\ \nonumber
&> m_{\mathcal{N}}(t_0)+\frac{A_M-A_m}{4}\\ \nonumber
&\geq m_{\mathcal{N}}(t_0)+\eta, \nonumber
\end{align}
where we have used the fact that $\Delta<\log(3)/B$. Similarly, using (\ref{E:NormalBTDecBd}), it follows that for $s\in[t,t']$,
\begin{align}\nonumber
x_i(s)&\geq x_i(t')e^{B(t'-s)}+M_{\mathcal{N}}(s)(1-e^{B(t'-s)})\\ \nonumber
&> (M_{\mathcal{N}}(t_0)-\eta)e^{B(t'-s)}+M_{\mathcal{N}}(s)(1-e^{B(t'-s)})\\ \nonumber
&\geq M_{\mathcal{N}}(s)-\eta e^{B(t'-s)}\\ \nonumber
&\geq M_{\mathcal{N}}(s) - \frac{A_M-A_m}{4} e^{B\Delta} \\ \nonumber
&> A_M - \frac{3}{4}(A_M-A_m) \\ \nonumber
&\geq A_m + \frac{1}{4}(A_M-A_m) \\ \nonumber
&\geq m_{\mathcal{N}}(t_0)+\eta. \nonumber
\end{align}
Therefore, $i\notin\mathcal{X}_{m}(t,t_0,\Delta,\eta)$.

Similarly, with the given choices for $\eta$ and $\Delta$, if $j\in\mathcal{X}_{m}(t,t_0,\Delta,\eta)$, then $\exists t'\in[t,t+\Delta]$ such that $x_j(t')<m_{\mathcal{N}}(t_0)+\eta$. It follows from (\ref{E:NormalFTIncBd}) that for $s\in[t',t+\Delta]$,
\begin{align}\nonumber
x_j(s)&\leq x_j(t')e^{-B(s-t')}+M_{\mathcal{N}}(t')(1-e^{-B(s-t')})\\ \nonumber
&< (m_{\mathcal{N}}(t_0)+\eta)e^{-B(s-t')}+M_{\mathcal{N}}(t_0)(1-e^{-B(s-t')})\\ \nonumber
&\leq M_{\mathcal{N}}(t_0)+(\eta + m_{\mathcal{N}}(t_0)-M_{\mathcal{N}}(t_0))e^{-B(s-t')}\\ \nonumber
&\leq M_{\mathcal{N}}(t_0)+\left(\frac{A_M-A_m}{4}-(A_M-A_m)\right)e^{-B(s-t')}\\ \nonumber
&\leq M_{\mathcal{N}}(t_0)-\frac{3}{4}(A_M-A_m)e^{-B\Delta}\\ \nonumber
&< M_{\mathcal{N}}(t_0)-\frac{A_M-A_m}{4} \\ \nonumber
&\leq M_{\mathcal{N}}(t_0)-\eta, \nonumber
\end{align}
where we have used the fact that $\Delta<\log(3)/B$. Finally, using (\ref{E:NormalBTIncBd}), it follows that for $s\in[t,t']$,
\begin{align}\nonumber
x_j(s)&\leq x_j(t')e^{B(t'-s)}+m_{\mathcal{N}}(s)(1-e^{B(t'-s)})\\ \nonumber
&< (m_{\mathcal{N}}(t_0)+\eta)e^{B(t'-s)}+m_{\mathcal{N}}(s)(1-e^{B(t'-s)})\\ \nonumber
&\leq m_{\mathcal{N}}(s)+\eta e^{B(t'-s)}\\ \nonumber
&\leq A_m + \frac{A_M-A_m}{4} e^{B\Delta} \\ \nonumber
&< A_m + \frac{3}{4}(A_M - A_m)\\ \nonumber
&\leq M_{\mathcal{N}}(t_0) - \frac{A_M-A_m}{4}\\ \nonumber
&\leq M_{\mathcal{N}}(t_0)-\eta. \nonumber
\end{align}
Thus, $j\notin\mathcal{X}_{M}(t,t_0,\Delta,\eta)$. This shows that $\mathcal{X}_{M}(t,t_0,\Delta,\eta)$ and $\mathcal{X}_{m}(t,t_0,\Delta,\eta)$ are disjoint for appropriate choices of the parameters.

Next, we show that by choosing $\epsilon$ small enough, we can define a sequence of sets,
$$
\mathcal{X}_{M}^k \triangleq \mathcal{X}_{M}(t_\epsilon+k\Delta,t_\epsilon,\Delta,\epsilon_k), \quad k=0,1,\dotsc,N,
$$
and
$$
\mathcal{X}_{m}^k \triangleq \mathcal{X}_{m}(t_\epsilon+k\Delta,t_\epsilon,\Delta,\epsilon_k), \quad k=0,1,\dotsc,N,
$$
where $N=|\mathcal{N}|$, so that we are guaranteed that by the $N$th step, at least one of the sets contains no normal nodes. This will be used to show that $\Psi$ has shrunk below $A_M-A_m$. Toward this end, let $\epsilon_0=(A_M-A_m)/4$ and $\Delta<\log(3)/B$. Then fix
\begin{equation*}
\label{E:EpsDefinedCTFTotal}
\epsilon < \frac{1}{2}\left[\frac{\alpha}{B}(1-e^{-B\Delta})e^{-B\Delta}\right]^{2N}\epsilon_0.
\end{equation*}
For $k=0,1,\dotsc,N$, define $\epsilon_k=[\tfrac{\alpha}{B}(1-e^{-B\Delta})e^{-B\Delta}]^{2k}\epsilon_0$, which results in
$$
\epsilon_0>\epsilon_1>\dots>\epsilon_N>2\epsilon>0.
$$
Observe that by definition, there is at least one normal node in $\mathcal{X}_{M}^0$ and $\mathcal{X}_{m}^0$ (the ones with extreme values), and we have shown above that all of the $\mathcal{X}_{M}^0$ and $\mathcal{X}_{m}^0$ are disjoint.  The $p$-fraction robust assumption (with $p>2f$) ensures that there exists a (normal) node $i$ in either $\mathcal{X}_{M}^0$ or $\mathcal{X}_{m}^0$ with at least $\lceil p d_i \rceil$ neighbors outside of either $\mathcal{X}_{M}^0$ or $\mathcal{X}_{m}^0$, respectively. For node $i$, at most $2\lfloor f d_i \rfloor$ of these values are thrown away (with at most $\lfloor f d_i \rfloor$ of them as adversaries, under the $f$-fraction local model, and at most $\lfloor f d_i \rfloor$ of these strictly smaller, or larger, than node $i$'s value). Because $p>2f$, it follows that $\lceil p d_i \rceil-2\lfloor f d_i \rfloor\geq1$. Therefore, at least one normal value outside of $i$'s set (either $\mathcal{X}_{M}^0$ or $\mathcal{X}_{m}^0$) is used. Assume $i\in\mathcal{X}_M^0$ has at least $\lceil p d_i \rceil$ neighbors outside of its set. Then, at least one of the values from $i$'s neighbors outside of $\mathcal{X}_M^0$ is used for almost all $t\in[t_\epsilon, t_\epsilon+\Delta]$. Then,
\begin{align*}
\dot{x}_i(t)&\leq \alpha(M_{\mathcal{N}}(t_\epsilon)-\epsilon_0-x_i(t))\\
&\hspace{1cm}+(B-\alpha)(M_{\mathcal{N}}(t_\epsilon)-x_i(t))\\
&\leq -B x_i(t)+BM_{\mathcal{N}}(t_\epsilon)-\alpha \epsilon_0,
\end{align*}
for almost all $t\in[t_\epsilon, t_\epsilon+\Delta]$. Using this, we can show
\begin{align}\nonumber
x_i(t_\epsilon+\Delta)&\leq x_i(t_\epsilon)e^{-B\Delta}+(M_{\mathcal{N}}(t_\epsilon)-\tfrac{\alpha \epsilon_0}{B})(1-e^{-B\Delta})\\ \nonumber
&\leq M_{\mathcal{N}}(t_\epsilon)-\tfrac{\alpha}{B}(1-e^{-B\Delta})\epsilon_0.\nonumber
\end{align}
Using this with (\ref{E:NormalFTIncBd}) to bound $x_i(t)$ for $t\in[t_\epsilon+\Delta, t_\epsilon+2\Delta]$, we see that for all $t\in[t_\epsilon+\Delta, t_\epsilon+2\Delta]$,
\begin{align}\nonumber
x_i(t)
&\leq M_{\mathcal{N}}(t_\epsilon)-\tfrac{\alpha}{B}(1-e^{-B\Delta})e^{-B(t-t_\epsilon-\Delta)}\epsilon_0 \\ \nonumber
&\leq M_{\mathcal{N}}(t_\epsilon)-\tfrac{\alpha}{B}(1-e^{-B\Delta})e^{-B\Delta}\epsilon_0 \\ \nonumber
&\leq M_{\mathcal{N}}(t_\epsilon)-\epsilon_1.  \nonumber
\end{align}
Thus, $i\notin \mathcal{X}_M^1$. The next step is to show that $j\notin\mathcal{X}_m^1$ whenever $j$ is a normal node with $j\notin\mathcal{X}_m^0$. Whenever $j\notin\mathcal{X}_m^0$, it means that $x_j(t_\epsilon+\Delta)\geq m_{\mathcal{N}}(t_\epsilon)+\epsilon_0$. Using this with (\ref{E:NormalFTDecBd}) to lower bound $x_j(t)$ for $t\in[t_\epsilon+\Delta, t_\epsilon+2\Delta]$, we see that
$$
x_j(t)\geq m_{\mathcal{N}}(t_\epsilon)+\epsilon_0 e^{-B\Delta} \geq m_{\mathcal{N}}(t_\epsilon)+\epsilon_1.
$$
Hence, $j$ is also not in $\mathcal{X}_m^1$, as claimed.  Likewise, we can show using (\ref{E:NormalFTIncBd}) that $j\notin\mathcal{X}_M^1$ whenever $j$ is a normal node with $j\notin\mathcal{X}_M^0$. Therefore, if $i\in \mathcal{X}_M^0$ uses at least one normal neighbor's value outside of its set, we are guaranteed that $|\mathcal{X}_M^1|<|\mathcal{X}_M^0|$ and $|\mathcal{X}_m^1|\leq|\mathcal{X}_m^0|$. Using a similar argument, we can show that if $i\in \mathcal{X}_m^0$ has at least $\lceil pd_i \rceil$ neighbors outside of its set, we are guaranteed that $|\mathcal{X}_m^1|<|\mathcal{X}_m^0|$ and $|\mathcal{X}_M^1|\leq|\mathcal{X}_M^0|$.

Now, if both $\mathcal{X}_M^1$ and $\mathcal{X}_m^1$ are nonempty, we can repeat the above argument to show that either $|\mathcal{X}_m^2|<|\mathcal{X}_m^1|$ or $|\mathcal{X}_M^2|<|\mathcal{X}_M^1|$, or both. It follows by induction that as long as both $\mathcal{X}_M^j$ and $\mathcal{X}_m^j$ are nonempty, then either $|\mathcal{X}_m^{j+1}|<|\mathcal{X}_m^j|$ or $|\mathcal{X}_M^{j+1}|<|\mathcal{X}_M^j|$ (or both), for $j=1,2,\dots$. Since $|\mathcal{X}_m^0|+|\mathcal{X}_M^0|\leq N$, there exists $T<N$ such that at least one of $\mathcal{X}_M^T$ and $\mathcal{X}_m^T$ is empty.
If $\mathcal{X}_M^T=\emptyset$, then $M_{\mathcal{N}}(t_\epsilon+T\Delta)\leq M_{\mathcal{N}}(t_\epsilon)-\epsilon_T<M_{\mathcal{N}}(t_\epsilon)-2\epsilon$. Similarly, if $\mathcal{X}_m^T=\emptyset$, then $m_{\mathcal{N}}(t_\epsilon+T\Delta)\geq m_{\mathcal{N}}(t_\epsilon)+\epsilon_T>m_{\mathcal{N}}(t_\epsilon)+2\epsilon$. In either case, $\Psi(t_\epsilon+T\Delta)<A_M-A_m$ and we reach the desired contradiction.
\end{proof}

%We may state the following result for time-varying networks.
We now extend the above result to time-varying networks.

\begin{theorem}
\label{T:ARCP2fFracLocalTimeVar}
Consider a time-varying network modeled by $\mathcal{D}(t) = (\mathcal{V},\mathcal{E}(t))$ under the $f$-fraction local Byzantine model. Let $\{t_k\}$ denote the switching times of $\sigma(t)$ and assume that $t_{k+1}-t_k\geq \tau$ for all $k$. Suppose each normal node updates its value according to ARC-P with parameter $f\in[0,1/2)$.  Then, CTRAC is achieved if there exists $t' \geq 0$ such that $\mathcal{D}(t)$ is $p$-fraction robust, where $2f<p\le 1$, for all $t\geq t'$.
\end{theorem}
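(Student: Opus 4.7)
The plan is to reduce the time-varying case to the time-invariant argument of Theorem~\ref{T:ARCP2fFracLocalSuff} by working inside a single dwell window on which the topology is constant. Exactly as in that proof, Lemma~\ref{L:BoundAveragePhi} would give monotone, bounded trajectories $M_{\mathcal{N}}(\cdot)$ and $m_{\mathcal{N}}(\cdot)$ with limits $A_M \geq A_m$, and I would suppose for contradiction that $A_M > A_m$ and set $\epsilon_0 = (A_M-A_m)/4$.

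The key modification is to couple the choice of $\Delta$ to the dwell time $\tau$. I would pick $\Delta > 0$ small enough that both $\Delta < \log(3)/B$ (needed for the interval estimates on $x_i$) and $(N+1)\Delta \leq \tau$ hold, and then choose $\epsilon$ as in the previous proof. I would then select $t_\epsilon$ to be a switching time of $\sigma$ satisfying $t_\epsilon \geq \max\{t', \tilde{t}_\epsilon\}$, where $\tilde{t}_\epsilon$ is large enough that $M_{\mathcal{N}}(t) < A_M + \epsilon$ and $m_{\mathcal{N}}(t) > A_m - \epsilon$ for all $t \geq \tilde{t}_\epsilon$. If $\sigma$ has only finitely many switches after $t'$, the network is eventually constant and $p$-fraction robust, so Theorem~\ref{T:ARCP2fFracLocalSuff} applies directly; otherwise such a $t_\epsilon$ exists. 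The dwell time assumption then guarantees that $\mathcal{D}(t)$ is constant on $[t_\epsilon, t_\epsilon + (N+1)\Delta]$, and since $t_\epsilon \geq t'$, this constant topology is $p$-fraction robust.

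With these choices, the induction of Theorem~\ref{T:ARCP2fFracLocalSuff} carries over essentially verbatim. The disjointness argument for $\mathcal{X}_M^k$ and $\mathcal{X}_m^k$ only uses the generic integrating-factor bounds (\ref{E:NormalFTIncBd})--(\ref{E:NormalBTIncBd}), which are independent of the topology. The shrinking step uses the $p$-fraction robustness of the fixed topology together with the count $\lceil p d_i\rceil - 2\lfloor f d_i \rfloor \geq 1$, which holds because $p > 2f$. After at most $N$ rounds we reach some $T \leq N$ with either $\mathcal{X}_M^T = \emptyset$ or $\mathcal{X}_m^T = \emptyset$, forcing $\Psi(t_\epsilon + T\Delta) < A_M - A_m$, the desired contradiction.

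The main obstacle is purely logistical: arranging the topology to remain constant for long enough to accommodate the full $N$-step induction. The dwell-time assumption is exactly what allows this, and shrinking $\Delta$ below $\tau/(N+1)$ absorbs any extra length requirement. Beyond this bookkeeping, no new analytical estimates are required, and the role of the dwell time is only to ensure the $p$-fraction robustness hypothesis is usable across consecutive intervals of the induction.
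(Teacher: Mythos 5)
Your proposal is correct and follows essentially the same route as the paper: run the contradiction argument of Theorem~\ref{T:ARCP2fFracLocalSuff} verbatim, but shrink $\Delta$ below a fraction of the dwell time $\tau$ (the paper uses $\Delta<\min\{\log(3)/B,\tau/N\}$, you use $(N+1)\Delta\leq\tau$, an immaterial difference) and anchor $t_\epsilon$ at a switching instant occurring after both $t'$ and the time at which $M_{\mathcal{N}}$ and $m_{\mathcal{N}}$ are within $\epsilon$ of their limits, so the entire $N$-step induction takes place on a single constant, $p$-fraction robust topology. Your explicit handling of the finitely-many-switches case corresponds to the paper's choice of $t_\epsilon=t''$ when no further switches occur.
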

\begin{proof}
The proof follows the contradiction argument of the proof of Theorem~\ref{T:ARCP2fFracLocalSuff}, but here we use the dwell time assumption. In this case, let
$$\Delta<\min\{\log(3)/B, \tfrac{\tau}{N} \}.$$
Fix
\begin{equation*}
\epsilon < \frac{1}{2}\left[\frac{\alpha}{B}(1-e^{-B\Delta})e^{-B\Delta}\right]^{2N}\epsilon_0,
\end{equation*}
and let $t_{\epsilon}'\geq 0$ be a point in time such that $M_{\mathcal{N}}(t)<A_M+\epsilon$ and $m_{\mathcal{N}}(t)>A_m-\epsilon$ for all $t\geq t_{\epsilon}'$. Define $t''=\max\{t',t_{\epsilon}'\}$. Then, associated to the switching signal $\sigma(t)$, we define $t_{\epsilon}$ as the next switching instance after $t''$, or $t''$ itself if there are no switching instances after $t''$. Since $\Delta<\tau/N$, the same sequence of calculations can be used (as in the proof of Theorem~\ref{T:ARCP2fFracLocalSuff}) to show that $\Psi(t_{\epsilon}+T\Delta)<A_M-A_m$.
\end{proof}

%%%%%%%%%%%%%%%%%%%%%%%%%%%%%%%%%%%%%%%%%%%%%%%%%%%%%%%%%%%%%%%%%%%%%%%%%%%%%%%%%%%%%%%%%%%%%%%%%%%%%%
% Conclusions

\section{\uppercase{Conclusion}}
\label{S:Conclusion}
% local information
This paper studies the continuous-time resilient asymptotic consensus problem. The adversary models studied are omniscient and have a scope that is fractional in nature (i.e., at most a fraction $f$ of nodes in any normal nodes neighborhood are assumed to be compromised). Under these assumptions, we show that a fractional version of the Adversarial Robust Consensus Protocol (ARC-P) achieves consensus among the normal nodes if the network is $2f$-fraction robust and only if the network is $f$-fraction robust. Determining a tight condition for these adversary models is a matter of future work.

%\addtolength{\texthe ight}{-3cm}   % This command serves to balance the column lengths
                                  % on the last page of the document manually. It shortens
                                  % the textheight of the last page by a suitable amount.
                                  % This command does not take effect until the next page
                                  % so it should come on the page before the last. Make
                                  % sure that you do not shorten the textheight too much.

%\section{ACKNOWLEDGMENTS}
%\section*{Acknowledgment}
%%The authors thank Nitin Vaidya for helpful discussions and for the pointer to related work.
% This work has been supported in part by the National Science Foundation (CNS-1035655, CCF-0820088),
%the U.S. Army Research Office (ARO W911NF-10-1-0005), and Lockheed Martin. The views and conclusions contained herein are those of the authors and should not be interpreted as necessarily representing the official policies or endorsements, either expressed or implied, of the U.S. Government.
%H. Zhang and S. Sundaram are supported in part by a grant from
%the Natural Sciences and Engineering Research Council of Canada (NSERC), and by a grant from the Waterloo Institute for Complexity and Innovation (WICI).

\end{document}